\newcommand*\diff{\mathop{}\!\mathrm{d}}
\newtheorem{claim}{Claim}
\newtheorem{theorem}{Theorem}
\begin{document}
\title{Optimal storage time for $N$ qubits coupled to a one-dimensional waveguide}

\author{Tevfik Can Yuce}
\affiliation{Department of Electrical and Electronics Engineering, Koc University, Istanbul, 34450, Turkey}
\author{Fatih Dinc}
\thanks{Email: fdinc@stanford.edu}
\affiliation{Department of Applied Physics, Stanford University, Stanford, CA, USA}
\author{Agata M. Bra\'nczyk}
\affiliation{Perimeter Institute for Theoretical Physics, Waterloo, Ontario, N2L 2Y5, Canada}

\begin{abstract}
Symmetry-protected subradiance is known to guarantee high qubit storage times in free space. We show that in one-dimensional waveguides, this is also true, but that even longer qubit storage times can be identified by considering the eigenspectrum of the qubit-qubit coupling matrix. In the process, we introduce three theorems about $N$ qubits coupled to a one-dimensional waveguide: i) the coupling matrix, which is otherwise non-singular over a continuum of qubit separation values, contains point-singularities; ii) the collective decay rates have symmetric properties, and iii) a linear chain of qubits coupled to a one-dimensional waveguide exhibits symmetry-protected subradiance. Our results will be beneficial for designing memory applications for future quantum technologies.
\end{abstract}

\maketitle

Quantum memory is a key component in quantum communication \cite{briegel1998quantum} and quantum information processing \cite{kimble2008quantum,sipahigil2016integrated,lvovsky2009optical}. One current research direction to develop quantum memories focuses on increasing the excitation storage times for qubits coupled to waveguides \cite{roy2017colloquium,calajo2019exciting}. While a single qubit coupled to a waveguide decays spontaneously and cannot be utilized as an efficient excitation storage by itself, special arrangements of multiple qubits can lead to destructive interference that suppresses spontaneous emission---a phenomenon called \emph{subradiance} \cite{begzjav2019permutation,albrecht2019subradiant,kornovan2019extremely,facchi2016bound,tufarelli2013dynamics,asenjo2017exponential,zhang2019theory}, which leads to enhanced storage time of excitations in multi-qubits systems \cite{scully2015single}. 

One complication of utilizing subradiance to enhance storage times comes from the fact that subradiant states are usually accompanied by a superradiant state (both are a result of interference) \cite{dicke1954coherence}. It is thus important to isolate the conditions that guarantee exclusive subradiant state preparation. One such scheme, known as symmetry-protected single-photon subradiance, has been proposed for $N$ atoms in free space with arbitrarily small distances \cite{cai2016symmetry} to aid quantum memory applications. However, practical applicability of such a structure is still unclear, and the authors do not provide a definitive answer to the question whether symmetry-protection leads to optimal storage times. Here, we provide a complete discussion of this concept for waveguide QED \cite{song2018photon,das2018photon,ruostekoski2017arrays,liao2015single}, where subradiance effects persist for non-zero qubit separations \cite{zhou2017single,tsoi2008quantum,dinc2019exact,facchi2019bound}.

In this paper, we investigate whether symmetry-protected subradiance occurs in an array of microscopically separated qubits coupled to a one-dimensional waveguide, and if so, whether it is the most optimal strategy. We show that symmetry-protected subradiance indeed exists in waveguide QED, but can be sub-optimal. Then, we provide the optimal strategy, which lies in finding the eigenvalues and eigenvectors of the coupling matrix. Finally, we conclude with remarks on future directions by discussing the non-Markovian regime \cite{fang2018non,carmele2020pronounced,grimsmo2015time,pichler2017universal,pichler2016photonic,dinc2019exact}, where between-qubit photon propagation time delay can no longer be ignored.

Our interest lies in the collective spontaneous emission from a linear chain of identical $N$ qubits, equally separated by a distance $L$, where each qubit has  energy gap $\Omega$. Spontaneous emission dynamics  can be derived from the collective decay rates of the entire system \cite{dinc2019exact}. To compute these, it is sufficient to consider only the single-excitation subspace. Let us start by writing down a general single-excitation state:
\begin{equation} \label{eq:generalized}
\begin{split}
        \ket{\psi(t)}&=\int_{-\infty}^\infty \diff x  [\psi_L(x,t)C_L^\dag(x)+\ldots\\&+\psi_R(x,t)C_R^\dag(x)] \ket{0} + \sum_{Q=1}^N \alpha_Q(t) e^{-i\Omega t} \ket{e_Q}.
\end{split}
\end{equation}
Here, $C_{R/L}^\dag(x)$ is the photon creation operator, $\psi_{R/L}(x,t)$ is the field amplitude for the right/left moving photons, $\alpha_Q(t)$ is the excitation coefficient for the qubit $Q$. $\ket 0$ is the vacuum state and $\ket{e_Q}$ is the state where only the qubit $Q$ is excited. 

Now, for our purposes, we assume that the qubits are microscopically separated such that inter-qubit propagation time-delays are neglected (Markovian regime). Moreover, we assume that the field is initially in the vacuum state. Then, the time evolution of qubit excitation coefficients is governed by a differential equation \cite{dinc2019exact,dinc2020diagrammatic}:
\begin{equation}\label{eq:timeevol}
   \dot \alpha_j(t) = -\frac{\gamma_0}{2} \sum_l J_{jl} \alpha_l(t),
\end{equation}
where $J_{jl}= e^{i|j-l|\theta}$ is the dimensionless qubit-qubit coupling matrix with $\theta=\Omega L$ and $\gamma_0$ is the single-emitter decay rate. 

For our purposes, it suffices to consider only the coupling matrix $J$
\begin{equation}
    J \dot = 
    \begin{bmatrix}
    1 & e^{i\theta} & \ldots & e^{i \theta (N-1)} \\
    e^{i\theta} & 1 &\ldots & e^{i \theta (N-2)} \\
    . & . & . & . \\
    . & . & . & . \\
    . & . & . & . \\
    e^{i\theta(N-1)} & e^{i \theta (N-2)} & \ldots & 1 
    \end{bmatrix},
\end{equation}
since the eigenvalues and eigenvectors of $J$ define the collective decay rates and interactive basis of the collective system. Large eigenvalues, e.g. decay rates, correspond to faster decay modes, whereas small eigenvalues signal decay modes that keep the excitation in the system for longer times. Mathematically, subradiance corresponds to the case where the coupling matrix is nearly singular. The subradiant states become bound-states in continuum (BIC) when the coupling matrix becomes exactly singular.

Before describing symmetry-protected subradiance, we first discuss symmetric and anti-symmetric states and their time-evolution. Symmetric (anti-symmetric) states are those that have $+1$ ($-1$) eigenvalue w.r.t. the mirror operator $\hat P$ that we discuss in SM 1. One can think of $\hat P$ as a generalized version of the parity operator from ordinary quantum mechanics. In \cite{dinc2019exact}, we conjectured the subdivision of symmetric and anti-symmetric decay modes for a linear chain of qubits, where an initially prepared symmetric (anti-symmetric) state couples to only symmetric (anti-symmetric) decay modes. Now, we present the proof.
\begin{theorem}[The symmetric and anti-symmetric collective decay rates]
The symmetric (anti-symmetric) states, that acquire a pre-factor of $\pm1$ upon mirroring with respect to the center, couple to only symmetric (anti-symmetric) collective decay rates. For even $N$, the dimensionality of symmetric and anti-symmetric subspace is equal to $N/2$. For odd $N$, the symmetric subspace is larger than the anti-symmetric subspace by a single dimension. 
\end{theorem}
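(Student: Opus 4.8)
The plan is to reduce the entire theorem to a single structural fact: the coupling matrix $J$ commutes with the mirror operator $\hat P$. Working in the site basis $\{\ket{e_Q}\}$, I would represent $\hat P$ by the $N\times N$ reversal (exchange) matrix with entries $P_{jk}=\delta_{j+k,\,N+1}$, which sends qubit $j$ to qubit $N+1-j$ and satisfies $P^2=\mathbb{1}$ and $P=P^{T}$. First I would note that $J$ is a \emph{symmetric Toeplitz} matrix, since $J_{jl}=e^{i|j-l|\theta}$ depends only on $|j-l|$ and obeys $J=J^{T}$. For any Toeplitz matrix one computes $(PJP)_{jl}=J_{N+1-j,\,N+1-l}=J_{lj}=(J^{T})_{jl}$, so $PJP=J^{T}=J$; multiplying on the right by $P$ and using $P^{2}=\mathbb{1}$ gives $[J,P]=0$. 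This persymmetry identity is the one nontrivial line; everything after it is bookkeeping.

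Given the commutation relation, the coupling statement is immediate. I would decompose $\mathbb{C}^{N}=V_{+}\oplus V_{-}$ into the $\pm1$ eigenspaces of $P$, i.e. the symmetric and anti-symmetric subspaces. Since $J$ commutes with $P$, it preserves each eigenspace: if $P\alpha=\pm\alpha$ then $P(J\alpha)=J(P\alpha)=\pm(J\alpha)$, so $J\alpha\in V_{\pm}$. Hence, in a basis adapted to $V_{+}\oplus V_{-}$, the matrix $J$ is block diagonal, and the flow $\dot\alpha=-\tfrac{\gamma_0}{2}J\alpha$ of Eq.~\eqref{eq:timeevol} never leaves the subspace in which it starts. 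An initially symmetric (anti-symmetric) amplitude therefore evolves entirely within $V_{+}$ ($V_{-}$) and couples only to the collective decay modes—eigenvectors of $J$—belonging to that block, which is precisely the claimed selection rule.

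For the dimension count I would evaluate $\operatorname{tr}P$, which equals the number of fixed points of the involution $j\mapsto N+1-j$. The equation $j=N+1-j$ has the integer solution $j=(N+1)/2$ exactly when $N$ is odd, so $\operatorname{tr}P=1$ for odd $N$ and $\operatorname{tr}P=0$ for even $N$. Writing $d_{\pm}=\dim V_{\pm}$ and using that $P$ has only eigenvalues $\pm1$, I get $d_{+}+d_{-}=N$ and $d_{+}-d_{-}=\operatorname{tr}P$. For even $N$ this forces $d_{+}=d_{-}=N/2$, and for odd $N$ it forces $d_{+}=(N+1)/2$ and $d_{-}=(N-1)/2$, so the symmetric subspace exceeds the anti-symmetric one by exactly one dimension.

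The subtlety I would treat most carefully is that $J$ is complex symmetric but \emph{not} Hermitian, so the spectral theorem does not apply and $J$ is not guaranteed a priori to be diagonalizable. This is why I phrase the core argument in terms of \emph{invariant subspaces} rather than an orthonormal eigenbasis: commutation with $P$ delivers the block structure, and hence the selection rule, irrespective of diagonalizability. If one further wants a complete set of decay modes, I would argue separately that each symmetry block of $J$ is diagonalizable—away from the isolated coupling values where $J$ develops the point-singularities established elsewhere in the paper, the eigenvalues within each sector are distinct, and those degenerate points can be handled as a measure-zero set of exceptions. I expect this non-Hermitian caveat, rather than the commutation computation itself, to be the only real obstacle.
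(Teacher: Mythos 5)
Your proof is correct and rests on the same key lemma as the paper's: the commutation $[J,P]=0$ of the coupling matrix with the reversal (mirror) matrix, which you obtain via the persymmetry identity $PJP=J^{T}=J$ and the paper obtains by the equivalent direct index computation $(JP)_{ak}=(PJ)_{ak}=e^{i\theta|N+1-k-a|}$. Where you genuinely diverge is in how the selection rule and the dimension count are extracted. The paper constructs an explicit eigenbasis of $P$ (the states $(\ket{e_j}\pm\ket{e_{N-j+1}})/\sqrt{2}$ plus the central site for odd $N$), expands the general solution in an eigenbasis of $J$, and argues via linear independence that anti-symmetric coefficients vanish for a symmetric initial state; this forces it to assume a non-degenerate (hence diagonalizable) spectrum for $J$, a caveat it acknowledges explicitly at the end of SM~1 and patches by a limiting argument near $\theta=n\pi$. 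Your invariant-subspace formulation --- $J$ preserves $V_{\pm}$, so the flow $\dot{\vec\alpha}=-\tfrac{\gamma_0}{2}J\vec\alpha$ is block-diagonal and never leaves the sector it starts in --- establishes the coupling statement without any diagonalizability hypothesis, which is a real improvement given that $J$ is complex symmetric rather than Hermitian. Your trace computation $\operatorname{tr}P=0$ or $1$ combined with $d_{+}+d_{-}=N$ is a tidier route to the dimension count than the paper's explicit enumeration, though the explicit eigenvectors the paper writes down are physically useful elsewhere (they are the symmetric/anti-symmetric preparations discussed in the main text). Your closing remark correctly identifies that diagonalizability of each block is only needed if one wants a complete set of decay modes, which matches the role the paper's non-degeneracy assumption actually plays.
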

\begin{proof}
See SM 1. 
\end{proof}

Theorem 1 guarantees that a symmetric or anti-symmetric preparation of the initial state eliminates nearly half of the decay modes. Such a property can be utilized for quantum memory applications, if the superradiant decay mode can be selectively and consistently eliminated even in the presence of experimental imperfections and parametric uncertainties. If this is possible, all the remaining modes would be subradiant, hence with long decay times.

Now, since subradiance emerges when $J$ becomes nearly singular, we find this condition:
\begin{theorem}[Singularity condition]
The coupling matrix is singular only for the discrete values $\theta=n\pi$.
\end{theorem}
\begin{proof}
See SM 2.
\end{proof}

The singularity condition obtained from the coupling matrix agrees with the previous results obtained from the propagators in the single-atomic-excitation subspace \cite{facchi2019bound}. Physically, the singularity condition refers to the case, where out of $N$ collective decay rates, $N-1$ become zero (extremely subradiant) and one becomes superradiant \cite{dinc2019exact}. Such a condition would be perfect for quantum memory applications, as an initially prepared state would not decay even for long time intervals. However, due to the discrete nature of the singularity condition, it is improbable to experimentally obtain the exact condition. We usually observe a less extreme subradiance (for $\theta \approx n \pi$) where $N-1$ decay rates are small but non-zero. In fact, for a linear chain of $N$ qubits, the most subradiant decay rate decreases with increasing $N$ \cite{albrecht2019subradiant,zhang2019theory}. Therefore, one approach for increasing the memory application potential of a quantum system is to increase the number of qubits, which leads to the natural question: In which superposition should the qubits be excited? 

One potential answer to this question lies in the symmetry-protected subradiance. Symmetry-protection guarantees that anti-symmetry leads to subradiance  free space. In waveguide QED, the distinction is more subtle, as the subradiance condition is not only $\theta\approx0$ (as it is in free space), but extends to a countable infinite set of points $\theta\approx n\pi$. Now, we state an equivalent theorem for waveguide QED:
\begin{theorem}[Symmetry-protected subradiance]
For $\theta\approx 2 n \pi $, the superradiant state is symmetric and anti-symmetric states are guaranteed to be subradiant. For $\theta \approx (2n+1)\pi$, superradiant state is  symmetric (anti-symmetric) for odd (even) qubit number $N$.
\end{theorem}
\begin{proof}
See SM 3.
\end{proof}

Unlike in free space, the properties of the superradiant state depend on the specific system geometry. For a linear chain of $N$ qubits, the superradiant state can be given as (SM 3)
\begin{equation}
    \ket{\psi_{\rm sup}}=\sum_{j=1}^N (-1)^{nj} \ket{e_j}.
\end{equation}
If $n$ is even, then $\ket{\psi_{\rm sup}}$ becomes the Dicke state. For odd $n$, $\ket{\psi_{\rm sup}}$ has alternating signs. Depending on $N$ and $\theta$ (or $n$ in $\theta=n\pi$), superradiant state can be either symmetric or anti-symmetric. Symmetry-protection of subradiance guarantees that the opposite subspace is always subradiant.

While the symmetry-protection in waveguide QED can guarantee subradiance, we have found that it might lead to sub-optimal subradiance. More concretely, there may be subradiant states that decay extremely slowly but have the same symmetric properties as the superradiant state. In the following, we provide a detailed discussion based on a simple example.

Consider the case $N=3$ and $\theta = 2\pi +\delta$, where $\delta$ is a small parameter. The decay rates can be given as \cite{dinc2019exact}
\begin{subequations} 
\begin{align}
      \Gamma_{\rm sup}^{(+)}(\delta) &= 3 + O(\delta), \\
    \Gamma_{\rm sub}^{(+)}(\delta) &= - \frac{2}{3}i \delta + \frac{2}{27}\delta^2 + O(\delta^3) , \\
    \Gamma_{\rm sub}^{(-)}(\delta) &=   -2i \delta +2 \delta^2 + O(\delta^3),
\end{align}
\end{subequations}
with corresponding eigenmodes:
\begin{subequations} 
\begin{align}
      |\psi_{\rm sup}^{(+)}(\delta)\rangle &\approx \frac{1}{\sqrt{3}}\left[\ket{e_1}+\left(1-\frac{i\delta}{3}\right)\ket{e_2}+\ket{e_3}\right]   ,\\
     |\psi_{\rm sub}^{(+)}(\delta)\rangle &\approx \frac{1}{\sqrt{6}}\left[\ket{e_1}-\left(2+\frac{2i\delta}{3}\right) \ket{e_2}+\ket{e_3}\right]  , \\
     |\psi_{\rm sub}^{(-)}(\delta)\rangle &= \frac{1}{\sqrt{2}}[\ket{e_1}-\ket{e_2}].
\end{align}
\end{subequations}
Here, we set $\gamma_0=1$ for simplicity, consider the Taylor expansion up to $O(\delta^2)$, and the subscripts ``sup''/``sub'' refer to superradiance/subradiance. While the symmetry-protected subradiance guarantees that $\Gamma_{\rm sub}^{(-)}$ is subradiant, $\Gamma_{\rm sub}^{(+)}$ has a $27$-fold smaller real part for a leading term in $\delta^2$. Thus, the symmetric subradiant state decays with a $27$-fold smaller exponential factor than the anti-symmetric one. 

Now, consider the time evolution of an initially excited state $\ket{\rm{sub^{(+)}}}=|\psi_{\rm sub}^{(+)}(0)\rangle=\frac{1}{\sqrt{6}}[\ket{e_1}-2\ket{e_2}+\ket{e_3}]$ for a finite, but small imperfection $\delta$. Due to this imperfection, there is a chance that this initial preparation decays partially through the superradiant decay mode. Fortunately, this portion is negligible compared to the portion decaying through the subradiant portion, since
\begin{equation}
    \begin{split}
       |\rm{sub^{(+)}}\rangle  &\approx \frac{2i\delta}{9\sqrt{2}}  |\psi_{\rm sup}^{(+)}(\delta)\rangle + \left(1- \frac{2i\delta}{9}  \right)|\psi_{\rm sub}^{(+)}(\delta)\rangle, \\
       &\approx O(\delta) |\psi_{\rm sup}^{(+)}(\delta)\rangle + O(1) |\psi_{\rm sub}^{(+)}(\delta)\rangle.
    \end{split}
\end{equation}
This means that $O(\delta)$ portion of $\ket{\rm sub^{(+)}}$ decays with a superradiant decay rate, whereas $O(1)$ portion decays with the optimal subradiant decay rate. For all practical purposes, $\ket{\rm sup^{(+)}}$ has a nearly zero superradiant part and is therefore subradiant even for non-zero, but small, deviations from the perfect condition, i.e. $\delta \neq 0$. Fig. \ref{fig:fig1} illustrates this concept for a particular example with $N=3$ qubits and $\theta = 2.1\pi$ with $\delta=0.1\pi$. In this figure, we calculate the qubit excitation probabilities following the approach described by \cite{dinc2019exact} in the Markovian regime. Thus, while the symmetry-protection provides a fully subradiant behavior, a clever preparation of the initial state can provide a stronger subradiance behavior, thus a better potential for quantum memory applications.

\begin{figure}
    \centering
    \includegraphics[width=\columnwidth]{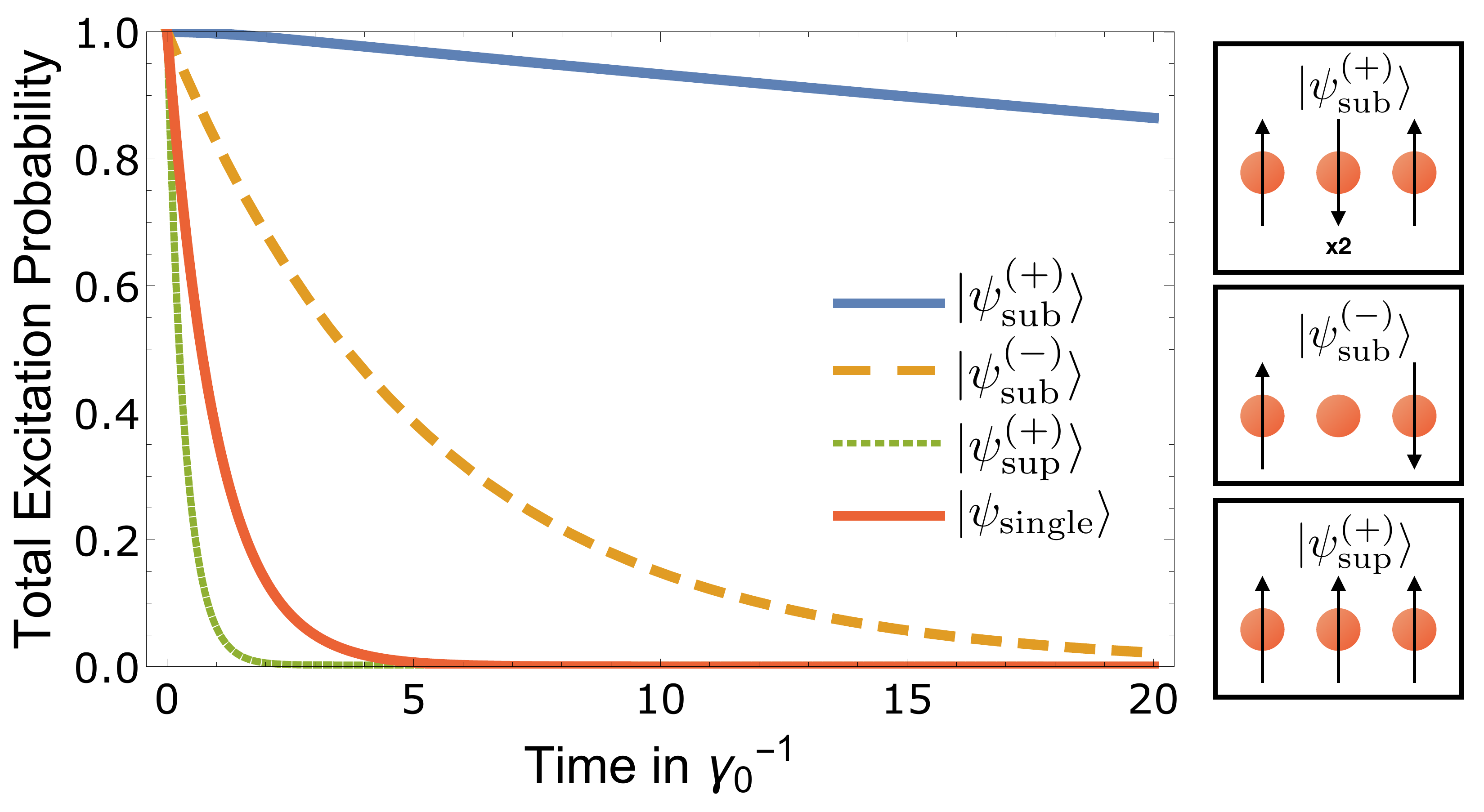}
    \caption{The total qubit excitation probabilities (left) and state preparation conditions with the upward/downward arrows representing positive/negative coherent excitations (right). We illustrate the decay of three initially prepared states for $N=3$ and of a single-qubit for reference. Here, we define $|\psi\rangle=|\psi(\delta=0)\rangle$ and pick $\theta=2\pi + 0.1 \pi$. While the symmetry protection guarantees subradiance for the anti-symmetric state, the symmetric subradiant state has a lower decay, even in the presence of parametric imperfections such as $\delta=0.1\pi$.}
    \label{fig:fig1}
\end{figure}

So far in this paper, we have considered the case where the qubit separation is small enough such that the photon-propagation time between adjacent qubits is negligible. Now, we argue that consideration of time-delayed quantum feedback within the system is the natural next step and a necessity for the experimental realization of long storage times for qubits.

The idea that $N$ qubits have $N$ collective decay rates applies only for the Markovian regime. As the time delayed feedback caused by the photon propagation between the qubits becomes prominent, non-Markovian processes become more significant and lead to infinitely many non-Markovian decay rates. The question is: How do these non-Markovian processes affect the subradiance behavior of the qubits? 

Large qubit separation leads to fully subradiant behavior of the system \cite{zheng2013persistent,dinc2019exact}, but are more susceptible to propagation losses and unwanted non-radiative decay. Compact multi-qubit systems with small qubit separation have lower propagation losses inside the waveguide and provide higher probability of exciting BIC through vacuum decay \cite{calajo2019exciting}. On the other hand, small separation would be hard to implement experimentally, might suffer from unwanted dipole-dipole interactions \cite{cheng2017waveguide} and lead to sub-optimal BIC excitation through time-delayed feedback and multi-photon scattering \cite{calajo2019exciting}. Since the subradiant decay couples very slightly to the waveguide, state preparation via single-photon states would be experimentally implausible. Therefore, BIC excitation through time-delayed feedback and multi-photon scattering is indeed needed to prepare the initial state, although it is an open question how to excite a specific subradiant state beyond $N=2$ qubits \cite{calajo2019exciting}. Consequently, the most optimal qubit separation to enhance BIC generation and to minimize non-radiative losses, which is also experimentally feasible, turns out to be moderate separation. This regime provides a higher compatibility, gives an experimental freedom to separate qubits sufficiently to avoid direct dipole-dipole interactions \cite{cheng2017waveguide} and shows a stronger subradiant and superradiant behavior than the Markovian regime \cite{dinc2019exact}. 

Unfortunately, the coupling matrix approach we have taken so far allows us to draw conclusions only for the Markovian regime, as for the non-Markovian regime, the time-evolution given by Eq. (\ref{eq:timeevol}) is no longer valid. In the non-Markovian regime, the eigenspectrum of the coupling matrix can no longer provide the optimal subradiance, but can provide approximations. 

Moreover, we have seen that symmetry-protection may lead to sub-optimal subradiance in the Markovian regime. Even if symmetry-protection exists in the non-Markovian regime, sub-optimality would still be an issue due to the continuity of the decay rates, e.g. as $\theta$ reaches zero, the non-Markovian decay rates coincide with the Markovian decay rates \cite{dinc2019exact,zheng2013persistent}. Therefore, while our discussion of the symmetry-protected and optimal subradiance may aid the non-Markovian explorations, further research is required to reach conclusions for this regime. Until then, one can seed numerical optimization algorithms with the (Markovian) eigenstates of the coupling matrix and obtain locally optimal states that decay with a subradiant decay rate, or for small number of qubits, use the diagrammatic non-Markovian time evolution approach developed in \cite{dinc2020diagrammatic} to obtain analytical solutions for the optimal subradiant decay.

In this paper, we have discussed the practical use of subradiance for increased excitation storage times starting from the time evolution of a general single-excitation state. We have proven the symmetric and anti-symmetric collective decay rates theorem, first conjectured in \cite{dinc2019exact} and used it to state a symmetry-protected subradiance theorem in waveguide QED equivalent to the one for the free space \cite{cai2016symmetry}. Then, we have shown that symmetry-protection, while guarantees subradiance, can lead to sub-optimal subradiance in waveguide QED. Finally, we discussed natural future directions of our work by discussing a more experimentally relevant regime where non-Markovian effects become prominent. We expect our work to be beneficial towards designing memory applications for future quantum technologies.

\emph{Acknowledgements} --- Research at Perimeter Institute is supported in part by the Government of Canada through the Department of Innovation, Science and Economic Development Canada and by the Province of Ontario through the Ministry of Colleges and Universities.

\newpage
\onecolumngrid
\setcounter{equation}{0}
\setcounter{theorem}{0}
\renewcommand{\theequation}{S\arabic{equation}}
\renewcommand{\thefigure}{S\arabic{figure}}
\renewcommand{\bibnumfmt}[1]{[S#1]}
\renewcommand{\thesection}{SM\arabic{section}}
\section{Proof of theorem 1: generalized mirror operator} 
In this section, we will provide the proof of the theorem 1 presented in the main text:
\begin{theorem}[The symmetric and anti-symmetric collective decay rates]
The symmetric (anti-symmetric) states, that acquire a pre-factor of $\pm1$ upon mirroring with respect to the center, couple to only symmetric (anti-symmetric) collective decay rates. For even $N$, the dimensionality of symmetric and anti-symmetric subspace is equal to $N/2$. For odd $N$, the symmetric subspace is larger than the anti-symmetric subspace by $1$. 
\end{theorem}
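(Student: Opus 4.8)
The plan is to exploit a symmetry of the coupling matrix $J$ itself rather than to attack its spectrum head-on. First I would make the mirror operator $\hat P$ concrete: it is the permutation that reflects the chain about its center, sending qubit $j$ to qubit $N+1-j$, with matrix entries $\hat P_{jl}=\delta_{l,\,N+1-j}$. This $\hat P$ satisfies $\hat P^2=\mathbb{1}$, so its spectrum is $\{+1,-1\}$, and the symmetric (resp.\ anti-symmetric) states of the theorem are exactly the vectors in its $+1$ (resp.\ $-1$) eigenspace.

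The central step is to show that $J$ is invariant under mirroring, i.e.\ $\hat P J \hat P=J$, or equivalently $[\hat P,J]=0$. Computing $(\hat P J \hat P)_{jl}=J_{N+1-j,\,N+1-l}$ and using the elementary identity $|(N+1-j)-(N+1-l)|=|j-l|$ gives $J_{N+1-j,\,N+1-l}=e^{i|j-l|\theta}=J_{jl}$, so every entry is fixed. Since $\hat P=\hat P^{-1}$, this is precisely the commutation relation. I would then read off the dynamical content: because the generator of Eq.~(\ref{eq:timeevol}) is proportional to $J$ and $\hat P$ commutes with it, the two eigenspaces of $\hat P$ are left invariant by the time evolution; if $\ket{\psi(0)}$ is symmetric (anti-symmetric) then $\ket{\psi(t)}$ remains symmetric (anti-symmetric) for all $t$. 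Restricting $J$ to each eigenspace, one obtains collective decay modes that are purely symmetric or purely anti-symmetric, establishing that a symmetric (anti-symmetric) preparation couples only to symmetric (anti-symmetric) decay rates.

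The dimension count follows purely from the fixed-point structure of the involution $\hat P$. For even $N$, $\hat P$ is a product of $N/2$ transpositions $(j,\,N+1-j)$ with no fixed point; each pair $\{j,N+1-j\}$ spans a two-dimensional space with one symmetric vector $\ket{e_j}+\ket{e_{N+1-j}}$ and one anti-symmetric vector $\ket{e_j}-\ket{e_{N+1-j}}$, so both subspaces have dimension $N/2$. For odd $N$ there are $(N-1)/2$ such transpositions together with the single fixed point $j=(N+1)/2$, whose basis vector $\ket{e_{(N+1)/2}}$ is symmetric; hence the symmetric subspace has dimension $(N+1)/2$ and the anti-symmetric subspace $(N-1)/2$, differing by exactly one.

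The main obstacle I would flag is that $J$ is complex symmetric but not Hermitian, so its decay modes need not be orthogonal and $J$ is not guaranteed to be unitarily diagonalizable. The argument should therefore rest on the commutation $[\hat P,J]=0$, which forces invariance of the eigenspaces irrespective of any spectral theorem, rather than on orthogonality of eigenvectors. I would also take care with the possibility of a decay rate shared between the two subspaces: such a rate can still be assigned eigenvectors lying entirely within one eigenspace of $\hat P$, since $J$ restricts to an operator on each invariant subspace and the symmetry labeling is inherited block by block.
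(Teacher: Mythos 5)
Your proposal is correct and follows essentially the same route as the paper's own proof: both hinge on the commutation relation $[\hat P, J]=0$ (which you verify via $\hat P J\hat P = J$ and the identity $|(N{+}1{-}j)-(N{+}1{-}l)|=|j-l|$, while the paper computes $(JP)_{ak}$ and $(PJ)_{ak}$ entrywise) and on counting the $\pm1$ eigenspaces of the involution $\hat P$ through its transposition/fixed-point structure. Your closing remarks are a modest strengthening rather than a different argument: by phrasing the dynamical conclusion as invariance of the $\hat P$-eigenspaces under the (non-Hermitian) generator, you avoid the explicit non-degeneracy/diagonalizability assumption that the paper invokes and only addresses by a limiting perturbation at the end of its proof.
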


Let us start by recalling the mirror operator, $\hat P$, for the single-qubit excitation space \cite{dinc2019exact}:
\begin{equation}
    \hat P \ket{e_j} = \ket{e_{N-j+1}},
\end{equation}
where $j$ is a dummy variable ranging from $1$ to $N$ denoting the qubit identity. On the qubit excitation subspace, we represent $\ket{e_j}$ as the unit vectors of an $N$-dimensional complex space. In this space, the points represent the coherent superposition of qubits. Then, the matrix representation for the mirror operator is as follows:
\begin{equation}
    P \dot =
    \begin{bmatrix}
   0 &  \ldots &  0 &1 \\
   0 &  \ldots &  1 &0 \\
    . & . & . & . \\
    1 &  \ldots &  0 &0 
    \end{bmatrix}
\end{equation}
The action of this matrix on the $j$th unit vector swaps the excitations of qubits $N-j+1$ and $j$. 
\begin{claim}
The eigenvalues of the mirror operator are $\pm 1$, where $+1$ corresponds to symmetric states and $-1$ corresponds to anti-symmetric states.
\end{claim}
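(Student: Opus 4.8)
The plan is to exploit the fact that the mirror operator is an involution. First I would verify by direct computation on the basis states that $\hat P^2=\hat I$: applying the defining relation twice gives $\hat P^2\ket{e_j}=\hat P\ket{e_{N-j+1}}=\ket{e_{N-(N-j+1)+1}}=\ket{e_j}$, so $\hat P$ squares to the identity on a spanning set and therefore everywhere. I would also note that the matrix $P$ is real and symmetric, since $P_{jk}=1$ exactly when $j+k=N+1$, a condition symmetric in $j$ and $k$.

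Next, from $\hat P^2=\hat I$ it follows that the minimal polynomial of $\hat P$ divides $x^2-1=(x-1)(x+1)$, which immediately confines every eigenvalue to the set $\{+1,-1\}$. Equivalently, because $P$ is real symmetric it is diagonalizable with real eigenvalues, and because it is also orthogonal those eigenvalues have unit modulus, again forcing them to be $\pm1$. This establishes the first half of the claim.

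To identify which eigenvalue corresponds to which symmetry class, I would expand a generic vector $\ket{\psi}=\sum_{j=1}^N c_j\ket{e_j}$ and apply the operator, obtaining $\hat P\ket{\psi}=\sum_{j=1}^N c_j\ket{e_{N-j+1}}=\sum_{k=1}^N c_{N-k+1}\ket{e_k}$. The eigenvalue equation $\hat P\ket{\psi}=+\ket{\psi}$ then reduces to the coefficient constraint $c_k=c_{N-k+1}$ for all $k$, i.e. the amplitudes are invariant under reflection about the center of the chain, which is precisely the defining property of the symmetric states. Likewise $\hat P\ket{\psi}=-\ket{\psi}$ gives $c_k=-c_{N-k+1}$, the anti-symmetric states, completing the identification.

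There is no serious obstacle here; the argument is self-contained and rests only on the involution property. The one point worth stating explicitly is that both eigenvalues are actually attained, so that neither label is vacuous, which I would check by exhibiting $\sum_j\ket{e_j}$ as a $+1$ eigenvector and $\ket{e_1}-\ket{e_N}$ as a $-1$ eigenvector. The precise multiplicities, namely the equal $N/2$ splitting for even $N$ and the one-dimensional excess of the symmetric subspace for odd $N$, form a separate counting statement that I would defer to the proof of Theorem 1, where the central fixed vector $\ket{e_{(N+1)/2}}$ accounts for the asymmetry.
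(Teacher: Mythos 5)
Your argument is correct, but it takes a genuinely different route from the paper's. The paper proves the claim constructively: it writes down the explicit vectors $\tfrac{1}{\sqrt{2}}\left(\ket{e_j}\pm\ket{e_{N-j+1}}\right)$ for $j=1,\ldots,\lfloor N/2\rfloor$, plus the central vector $\ket{e_{(N+1)/2}}$ when $N$ is odd, checks that these are $N$ linearly independent eigenvectors of $\hat P$ with eigenvalues $\pm1$, and concludes that the spectrum is exhausted. You instead argue abstractly: $\hat P^2=\hat I$ forces the minimal polynomial to divide $x^2-1$, which confines the eigenvalues to $\{+1,-1\}$ (and, since the roots are distinct, gives diagonalizability for free), and you then identify the two eigenspaces through the coefficient constraints $c_k=\pm c_{N-k+1}$. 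Each approach buys something: yours is shorter and isolates the structural reason the spectrum is $\pm1$ (involution), and your coefficient characterization of the eigenspaces is arguably a cleaner statement of what ``symmetric'' and ``anti-symmetric'' mean; the paper's explicit basis, on the other hand, delivers the subspace dimensions ($N/2$ each for even $N$, with the symmetric subspace one larger for odd $N$) as an immediate byproduct, which is the remaining content of Theorem~1 and which you correctly note you would have to supply separately by essentially the same counting. The only cosmetic caveat is the degenerate case $N=1$, where your witness $\ket{e_1}-\ket{e_N}$ vanishes and the eigenvalue $-1$ is not attained; the paper's construction handles this silently since the anti-symmetric list is simply empty there.
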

The mathematical definition of symmetry and anti-symmetry comes from the mirror operator. Intuitively, one can consider the mirror operator as flipping the qubits with respect to the center. If a state is excited in a symmetric manner, it will remain unchanged. If it is anti-symmetric, then it will acquire a $-1$ pre-factor. Then, we can prove this claim by picking the eigenstates as the symmetric and anti-symmetric unit vectors and showing that they are indeed eigenstates. Once we find all $N$ eigenstates, the proof is over.

For even $N=2M$, let us pick the set of eigenvectors as
\begin{equation}
    \ket{\xi} \sim \frac{1}{\sqrt{2}} \left( \ket{e_j} \pm \ket{e_{N-j+1}} \right), \quad \forall j = 1, \ldots, M.
\end{equation}
There are $N/2$ symmetric and $N/2$ anti-symmetric eigenstates. For odd $N=2M+1$, we have an additional eigenstate:
\begin{equation}
    \ket{\xi} \sim \ket{e_{(N+1)/2}},
\end{equation}
which is a symmetric eigenstate. Hence, we have found $N$ distinct eigenstates for the mirror operator, which have only $\pm 1$ eigenvalues.
\begin{claim}
The eigenstates of the coupling matrix can be chosen to be either symmetric or anti-symmetric.
\end{claim}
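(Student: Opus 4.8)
The plan is to exploit the fact that the coupling matrix $J$ commutes with the mirror operator $\hat P$, so that $J$ respects the splitting of the excitation space into the symmetric and anti-symmetric eigenspaces of $\hat P$. First I would establish $[J,P]=0$ by a direct computation. Using $J_{jl}=e^{i|j-l|\theta}$ together with the matrix elements $P_{jk}=\delta_{j,\,N-k+1}$, one obtains $(PJP)_{jl}=J_{N-j+1,\,N-l+1}=e^{i\,|(N-j+1)-(N-l+1)|\,\theta}=e^{i|j-l|\theta}=J_{jl}$, since the reflection $j\mapsto N-j+1$ preserves index differences. Because $P^2=I$, this yields $PJP=J$, equivalently $JP=PJ$.

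Next I would use this commutation to show that $J$ leaves each eigenspace of $P$ invariant. By Claim~1, $P$ has only the eigenvalues $\pm 1$, with eigenspaces $V_+$ (symmetric) and $V_-$ (anti-symmetric), and $\mathbb{C}^N=V_+\oplus V_-$. If $Pv=\pm v$, then $P(Jv)=J(Pv)=\pm Jv$, so $Jv\in V_\pm$ whenever $v\in V_\pm$; hence $J$ is block-diagonal with respect to $V_+\oplus V_-$. Diagonalizing each block separately then produces eigenvectors of $J$ lying entirely in $V_+$ or entirely in $V_-$, that is, eigenvectors that are symmetric or anti-symmetric.

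To handle degenerate eigenvalues cleanly, I would phrase the final step in terms of eigenspaces rather than individual vectors: for any eigenvalue $\mu$ of $J$, the operator $P$ commutes with $J-\mu I$ and therefore preserves $\ker(J-\mu I)$; restricting $P$ to this kernel gives an involution whose $\pm 1$ eigenvectors span it, so a basis of each eigenspace of $J$ can be chosen from $V_+\cup V_-$. This establishes that the eigenstates of $J$ can always be taken to be symmetric or anti-symmetric.

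The main obstacle I anticipate is that $J$ is complex symmetric rather than Hermitian, so one cannot simply invoke the spectral theorem to assert a common orthonormal eigenbasis of $J$ and $P$; the argument must be carried out at the level of invariant subspaces, and care is needed where a degenerate eigenvalue of $J$ has an eigenspace meeting both $V_+$ and $V_-$. Framing everything through the $P$-invariance of the $J$-eigenspaces, as above, sidesteps this difficulty without requiring $J$ to be normal.
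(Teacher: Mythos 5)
Your proof is correct and follows essentially the same route as the paper's: establish $[J,P]=0$ by a direct index computation (the paper computes $(JP)_{ak}$ and $(PJ)_{ak}$ separately, you use $PJP=J$ with $P^2=I$, which is equivalent) and then conclude that eigenvectors of $J$ can be chosen within the $\pm1$ eigenspaces of $P$. Your explicit handling of degenerate eigenvalues via the $P$-invariance of $\ker(J-\mu I)$ is in fact slightly more careful than the paper, which defers the degeneracy issue to a closing remark assuming a non-degenerate spectrum.
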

To prove this claim, we need to show that the coupling matrix and the mirror matrix commute such that $[J,P]=0$. Denoting $P_{ij}=\delta_{i (N+1-j)}$, where $\delta$ is the Kronecker delta, we find
\begin{equation}
\begin{split}
    (J P)_{ak} &= \sum_j J_{aj} \delta_{j (N+1-k)} = J_{a (N+1-k)} = e^{i\theta|N+1-k-a|}, \\
    (P J)_{ak} &= \sum_j \delta_{a (N+1-j)} J_{jk} = J_{(N+1-a)k} =e^{i\theta|N+1-k-a|},
\end{split}
\end{equation}
where $JP = PJ \implies [J,P]=0$. Thus, a general time-evolution solution (where we define $\vec \alpha = [\alpha_1, \ldots, \alpha_N]$) becomes:
\begin{equation}
    \vec \alpha = \sum_{k=1}^N x_k \vec \xi^{(k)} e^{-\lambda_k t},
\end{equation}
where $\vec \xi^{(k)}$ is the $k$th eigenvector, $x_k$ are some coefficients, and $\lambda_k$ is the corresponding eigenvalue. Here, $\vec \xi^{(k)}$s are either symmetric or anti-symmetric. Thus, for an initial symmetric excitation $\vec \alpha_0 = \sum_k x_k \vec \xi^{(k)}$, $x_k=0$ for when $\xi^{(k)}$ is anti-symmetric. This is because $\xi^{(k)}$ are linearly independent and a re-summation of all $\sum_{k \in \text{anti-sym}} x_k \vec \xi
^{(k)}=0$, which is a requirement for $\vec \alpha_0$ to be symmetric, is only zero when all such $x_k=0$. A similar argument can be made for an initial anti-symmetric $\vec \alpha_0$. This finishes the main body of the proof.

Now, we can bring the proof together by summarizing the main steps. The mirror operator has $N$ eigenvalues, which are either $+1$ or $-1$. The degeneracy of the eigenvalues depend whether $N$ is even or odd. If $N$ is even, there are $N/2$ of each. For odd $N$, the symmetric subspace is larger by a one dimension. The coupling matrix commutes with the mirror matrix, hence any eigenvector of the coupling matrix is either symmetric or anti-symmetric (or can be chosen to be, if there is a degeneracy). Since the Hamiltonian commutes with the mirror operator \cite{dinc2019exact}, the symmetric properties of an initially prepared state does not change during time-evolution. Hence, an initially symmetric/anti-symmetric prepared state will decay only through symmetric/anti-symmetric decay modes, which we have shown analytically by writing down a general solution of the time-evolved state.

Throughout this proof (and thereafter), we make the assumption that $J$ has a non-degenerate spectrum. This is indeed the case for many examples we have considered for various $N$ and $\theta \neq n\pi$ and where the decay mode description is reasonable. For cases where $J$ is non-degenerate (or even non-diagonalizable, which we haven't encountered so far), for example when $\theta=n \pi$, we can use some small perturbations $\epsilon$ and take the limit $\epsilon\to 0$, defining the symmetric and anti-symmetric subspaces asymptotically.

\section{Proof of theorem 2: LU decomposition of the coupling matrix} 
In this section, we will provide the proof of the theorem 2 in the main text:
\begin{theorem}[Singularity condition]
The coupling matrix is singular only for the discrete values $\theta=n\pi$.
\end{theorem}

We start by considering the lower and upper triangular decomposition of the coupling matrix.
\begin{claim}
The coupling matrix can be decomposed into a lower and upper triangular form $J=LU$, where $U$ is singular if and only if $\theta=n\pi$ ($n$ is a non-negative integer) and $L$ is non-singular.
\end{claim}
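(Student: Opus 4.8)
The plan is to exhibit the factorization explicitly and read the pivots off the diagonal of $U$. Writing $z = e^{i\theta}$ so that $J_{jl} = z^{|j-l|}$ is a symmetric Toeplitz matrix, I would guess that $L$ is simply the lower-triangular part of $J$ itself, $L_{jl} = z^{j-l}$ for $j \ge l$ and $L_{jl} = 0$ otherwise, and that $U$ is the upper-triangular matrix whose first row coincides with that of $J$, namely $U_{1l} = z^{l-1}$, while every later row carries an extra factor $(1-z^2)$: that is, $U_{jl} = (1-z^2)\,z^{l-j}$ for $2 \le j \le l$, and $U_{jl} = 0$ for $l < j$. This ansatz is motivated by carrying out the first one or two steps of Gaussian elimination for small $N$ (e.g. $N=2,3$), where one sees the first pivot equal to $1$ and all subsequent pivots equal to $1-z^2$.

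The core of the proof is the verification $LU = J$. Computing $(LU)_{jl} = \sum_{k=1}^{\min(j,l)} L_{jk} U_{kl}$, the $k=1$ term contributes $z^{j+l-2}$, and the remaining terms $k \ge 2$ contribute $(1-z^2)\sum_k z^{j+l-2k}$. The whole sum collapses using the single telescoping identity $(1-z^2)(z^m + z^{m+2} + \cdots + z^{M}) = z^m - z^{M+2}$, after which the spurious $z^{j+l-2}$ terms cancel and one is left with exactly $z^{|j-l|}$. I would organize this by splitting into the cases $j \le l$ and $j > l$ (the diagonal being a degenerate instance of either), treating the special first row of $U$ separately since $U_{1l}$ does not carry the $(1-z^2)$ factor.

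Once $J = LU$ is established, the conclusion is immediate. The matrix $L$ is unit lower-triangular, $L_{jj} = 1$, so $\det L = 1$ and $L$ is non-singular for every $\theta$. The diagonal of $U$ is $(1, 1-z^2, \ldots, 1-z^2)$, whence $\det U = (1-z^2)^{N-1}$. Therefore $U$ is singular if and only if $1-z^2 = 0$, i.e. $z^2 = 1$, i.e. $z = e^{i\theta} = \pm 1$, which holds precisely when $\theta = n\pi$ for a non-negative integer $n$ (non-negative because $\theta = \Omega L \ge 0$). This proves the claim and, since $J = LU$ with $L$ invertible, also yields Theorem 2.

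The main obstacle here is not conceptual but bookkeeping: getting the index ranges in $(LU)_{jl}$ correct, handling the asymmetry introduced by the special first row of $U$, and arranging the telescoping sum so that the $z^{j+l-2}$ contributions cancel cleanly in both cases. Once the telescoping identity is isolated, the rest is routine. As a consistency check, this factorization reproduces the known tridiagonal inverse of $J$ for $z \ne \pm 1$ and the determinant $\det J = (1-z^2)^{N-1}$, in agreement with the singularity locus $\theta = n\pi$.
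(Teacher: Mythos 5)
Your proposal is correct and matches the paper's proof essentially verbatim: the same unit lower-triangular $L$ (the lower-triangular part of $J$), the same $U$ with special first row and pivots $1-e^{2i\theta}$ elsewhere, the same case-split telescoping verification of $LU=J$, and the same determinant argument $\det J = (1-e^{2i\theta})^{N-1}$. No gaps.
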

To prove this claim, we first state the $L$ and $U$ matrices and show that $J=LU$. We start by defining a complex number p for generalization:
\begin{equation}
    p \dot = e^{i \theta}
\end{equation}
Matrix form of the $N$-dimensional lower triangular matrix $L$ is as follows:
\begin{equation}
    L \dot =
    \begin{bmatrix}
   1 & 0 &  \ldots  & 0  \\
   p &  1 &  \ldots  & 0 \\
   . & . & . & . \\
    p^{N-1} & p^{N-2}  & \ldots  &1 
    \end{bmatrix}, \quad \quad
    L_{aj} = 
    \begin{cases}
        0 & a<j \\
        p^{a-j} & a \geq j
    \end{cases}
\end{equation}
Matrix form of the $N$-dimensional upper triangular matrix $U$ is as follows:
\begin{equation}
        U \dot = 
    \begin{bmatrix}
        1 & p & p^2 & p^3 & \dots  & p^{N-1} \\
        0 & 1-p^2 & p(1-p^2) & p^2(1-p^2) & \dots  & p^{N-2}(1-p^2) \\
        0 & 0 & 1-p^2 & p(1-p^2) & \dots  & p^{N-3}(1-p^2) \\
         \vdots &   \vdots &   \vdots &   \vdots &  \vdots &   \vdots \\
        0 & 0 & 0 & 0 & \dots  & 1-p^2
    \end{bmatrix}, \quad \quad
        U_{aj} = 
    \begin{cases}
        p^{j-1} & a=1 \\
        p^{j-a}(1-p^2) &  1 < a \leq j \\
        0  & j < a
    \end{cases}
\end{equation}
We can find $LU$ by direct matrix multiplication.
\begin{equation}
    (LU)_{aj} = \sum_{k=1}^{N} L_{ak}U_{kj} 
\end{equation}

We divide matrix multiplication into two separate cases. For $a \leq j$:
\begin{equation}
     (LU)_{aj} = L_{a1} U_{1j} + \sum_{k=2}^{a} L_{ak}U_{kj} + \sum_{k=a+1}^{N} L_{ak}U_{kj} 
\end{equation}
For the final summation $k > a$, then $L_{ak} = 0$ for this summation. Then $(LU)_{aj}$ simplifies to:
\begin{equation}
   (LU)_{aj} = L_{a1} U_{1j} + \sum_{k=2}^{a} L_{ak}U_{kj} 
\end{equation}
For $2 \leq k \leq a$ plug in values of matrices which are $L_{ak} = p^{a-k}$ and $U_{kj} = p^{k-j}(1-p^2)$ then we obtain the following equation:
\begin{equation}
    (LU)_{aj} = p^{a+j-2} + p^{a+j} \sum_{k=2}^{a} (p^{-2k} - p^{-2k+2}) = p^{j-a}
\end{equation}

For $j<a$, we follow similar steps and obtain:
\begin{equation}
   (LU)_{aj} = L_{a1} U_{1j} + \sum_{k=2}^{j} L_{ak}U_{kj} =p^{a-j}.
\end{equation}
By considering these two cases together, the final form of the $LU$ is as follows:
\begin{equation}
    (LU)_{aj} = p^{|a-j|} = J_{aj}
\end{equation}

Now, we return to the discussion of singularity. $L$ and $U$ are triangular matrices, then determinant of these matrices are equal to the multiplication of the elements on the main diagonal. Therefore, $\det(L) = 1$ and $\det(U) = (1-p^2)^{N-1}$. Using $\det(J)=\det(L)\det(U)$, we have that $\det(J) = (1-p^2)^{N-1}$. Therefore, $J$ is singular ($\det(J)=0$) if and only if $p= \pm 1$ which is equivalent to $\theta = n \pi$ ($n$ is a non-negative integer). This finishes the proof.

\section{Proof of theorem 3: symmetry-protection in waveguide QED} 
In this section, we will provide the proof of the theorem 3 in the main text:
\begin{theorem}[Symmetry-protected subradiance]
For $\theta\approx 2 n \pi $, the superradiant state is symmetric and anti-symmetric states are guaranteed to be subradiant. For $\theta \approx (2n+1)\pi$, superradiant state is  symmetric (anti-symmetric) for odd (even) qubit number $N$.
\end{theorem}

Let us start with $\theta = 2n \pi + \delta$, where $\delta$ is a small parameter such that 
\begin{equation}
    e^{i\theta} = e^{i(2n \pi + \delta)} = e^{i \delta} = 1+ i\delta + O(\delta ^ 2)
\end{equation}
Coupling matrix can be written as  below after neglecting higher order terms:
\begin{equation}
    J \approx \begin{bmatrix}
    1 & 1 & \ldots & 1 \\
    1 & 1 &\ldots & 1 \\
    . & . & . & . \\
   1 & 1 & \ldots & 1 
    \end{bmatrix}
    + i \delta
    \begin{bmatrix}
    0 & 1 & \ldots & N-1 \\
    1 & 0 &\ldots & N-2 \\
    . & . & . & . \\
   N-1 & N-2 & \ldots & 0 
    \end{bmatrix}
\end{equation}

Define two new matrices such that $J \approx J_0 +i \delta M$,
\begin{equation}
   J_0 \dot = \begin{bmatrix}
    1 & 1 & \ldots & 1 \\
    1 & 1 &\ldots & 1 \\
    . & . & . & . \\
   1 & 1 & \ldots & 1 
    \end{bmatrix}
\end{equation}

\begin{equation}
    M \dot = 
    \begin{bmatrix}
    0 & 1 & \ldots & N-1 \\
    1 & 0 &\ldots & N-2 \\
    . & . & . & . \\
   N-1 & N-2 & \ldots & 0 
    \end{bmatrix}
\end{equation}

$J_0$ has only one non-zero eigenvalue which is corresponds to superradiant state. $\lambda_{\rm sup} = N$ is the eigenvalue and $w_{\rm sup} = \begin{bmatrix} 1, \ldots ,1 \end{bmatrix}^T$ is the corresponding eigenvector. In addition, $w_{\rm sup}^T J_0 = \lambda_{\rm sup} w_{\rm sup}^T$. For a sufficiently small $\delta$, the eigenvalues and eigenvectors of $J$ can be approximated from the known eigenvalues and eigenvectors of the $J_0$. Let $\lambda_J$ and $w_J$ be the corresponding eigenvalue and eigenvector of the $J$. Then we can approximate $\lambda_J$ as:
\begin{equation}
    \lambda_J = \lambda_{\rm sup} + i \delta \dfrac{w_{\rm sup}^T M w_{\rm sup}}{w_{\rm sup}^Tw_{\rm sup}} + O(\delta^2),
\end{equation}
where we realize that $\text{Re}[\lambda_J]=N+ O(\delta^2)$ the decay portion of the eigenvalue vanishes to the first order. For the eigenvector, we pick $w_J = w_{\rm sup}+O(\delta)$. Hence, for small $\delta$, the Dicke state is the superradiant state:
\begin{equation}
    \ket{\psi_{\rm sup}}=\frac{1}{\sqrt{N}} \sum_{j=1}^N \ket{e_j}.
\end{equation}
This is a symmetric eigenstate and corresponds to a symmetric $\vec \xi^{(1)}$. Thus, for an anti-symmetric initial condition $\vec \alpha_0$, corresponding coefficient for the superradiant decay mode is zero up to $O(\delta)$. Thus, for $\theta \approx 2 \pi n$, the anti-symmetric coherent excitation subspace is guaranteed to be subradiant.

A similar calculation can be performed for $\theta = (2n+1)\pi + \delta$, where we obtain that $\text{Re}[\lambda_J]=N + O(\delta^2)$ and $w_J = w_{\rm sup} + O(\delta)$ with the only difference that $w_J=[-1,1,-1,\ldots]$. Then, the superradiant state becomes:
\begin{equation}
    \ket{\psi_{\rm sup}}=\frac{1}{\sqrt{N}} \sum_{j=1}^N (-1)^j \ket{e_j}.
\end{equation}
Now, this state is symmetric if $N$ is odd and anti-symmetric if $N$ is even. Thus, the symmetric (anti-symmetric) subspace is guaranteed to be subradiant for even (odd) $N$.


\begin{thebibliography}{10}

\bibitem{briegel1998quantum}
H.-J. Briegel {\it et~al.}, ``Quantum repeaters: the role of imperfect local
  operations in quantum communication,'' Physical Review Letters {\bf 81}, 5932
  (1998).

\bibitem{kimble2008quantum}
H.~J. Kimble, ``The quantum internet,'' Nature {\bf 453}, 1023 (2008).

\bibitem{sipahigil2016integrated}
A. Sipahigil {\it et~al.}, ``An integrated diamond nanophotonics platform for
  quantum-optical networks,'' Science {\bf 354}, 847 (2016).

\bibitem{lvovsky2009optical}
A.~I. Lvovsky, B.~C. Sanders, and W. Tittel, ``Optical quantum memory,'' Nature
  photonics {\bf 3}, 706 (2009).

\bibitem{roy2017colloquium}
D. Roy, C.~M. Wilson, and O. Firstenberg, ``Colloquium: Strongly interacting
  photons in one-dimensional continuum,'' Reviews of Modern Physics {\bf 89},
  021001 (2017).

\bibitem{calajo2019exciting}
G. Calaj{\'o} {\it et~al.}, ``Exciting a Bound State in the Continuum through
  Multiphoton Scattering Plus Delayed Quantum Feedback,'' Physical Review
  Letters {\bf 122}, 073601 (2019).

\bibitem{begzjav2019permutation}
T.~K. Begzjav, L. Wang, and R. Nessler, ``On permutation symmetry of subradiant
  states and its application,'' Physica Scripta {\bf 94}, 094001 (2019).

\bibitem{albrecht2019subradiant}
A. Albrecht {\it et~al.}, ``Subradiant states of quantum bits coupled to a
  one-dimensional waveguide,'' New Journal of Physics {\bf 21}, 025003 (2019).

\bibitem{kornovan2019extremely}
D. Kornovan {\it et~al.}, ``Extremely subradiant states in a periodic
  one-dimensional atomic array,'' Physical Review A {\bf 100}, 063832 (2019).

\bibitem{facchi2016bound}
P. Facchi {\it et~al.}, ``Bound states and entanglement generation in waveguide
  quantum electrodynamics,'' Physical Review A {\bf 94}, 043839 (2016).

\bibitem{tufarelli2013dynamics}
T. Tufarelli, F. Ciccarello, and M. Kim, ``Dynamics of spontaneous emission in
  a single-end photonic waveguide,'' Physical Review A {\bf 87}, 013820 (2013).

\bibitem{asenjo2017exponential}
A. Asenjo-Garcia {\it et~al.}, ``Exponential improvement in photon storage
  fidelities using subradiance and “selective radiance” in atomic arrays,''
  Physical Review X {\bf 7}, 031024 (2017).

\bibitem{zhang2019theory}
Y.-X. Zhang and K. M\o{}lmer, ``Theory of Subradiant States of a
  One-Dimensional Two-Level Atom Chain,'' Phys. Rev. Lett. {\bf 122}, 203605
  (2019).

\bibitem{scully2015single}
M.~O. Scully, ``Single photon subradiance: quantum control of spontaneous
  emission and ultrafast readout,'' Physical review letters {\bf 115}, 243602
  (2015).

\bibitem{dicke1954coherence}
R.~H. Dicke, ``Coherence in spontaneous radiation processes,'' Physical review
  {\bf 93}, 99 (1954).

\bibitem{cai2016symmetry}
H. Cai {\it et~al.}, ``Symmetry-protected single-photon subradiance,'' Physical
  Review A {\bf 93}, 053804 (2016).

\bibitem{song2018photon}
G.-Z. Song {\it et~al.}, ``Photon transport mediated by an atomic chain trapped
  along a photonic crystal waveguide,'' Physical Review A {\bf 98}, 023814
  (2018).

\bibitem{das2018photon}
S. Das {\it et~al.}, ``Photon scattering from a system of multilevel quantum
  emitters. II. Application to emitters coupled to a one-dimensional
  waveguide,'' Physical Review A {\bf 97}, 043838 (2018).

\bibitem{ruostekoski2017arrays}
J. Ruostekoski and J. Javanainen, ``Arrays of strongly coupled atoms in a
  one-dimensional waveguide,'' Physical Review A {\bf 96}, 033857 (2017).

\bibitem{liao2015single}
Z. Liao {\it et~al.}, ``Single-photon transport through an atomic chain coupled
  to a one-dimensional nanophotonic waveguide,'' Physical Review A {\bf 92},
  023806 (2015).

\bibitem{zhou2017single}
Y. Zhou, Z. Chen, and J.-T. Shen, ``Single-photon superradiant emission rate
  scaling for atoms trapped in a photonic waveguide,'' Physical Review A {\bf
  95}, 043832 (2017).

\bibitem{tsoi2008quantum}
T.~S. Tsoi and C.~K. Law, ``Quantum interference effects of a single photon
  interacting with an atomic chain inside a one-dimensional waveguide,''
  Physical Review A {\bf 78}, 063832 (2008).

\bibitem{dinc2019exact}
F. Dinc, {\.{I}}. Ercan, and A.~M. Bra{\'{n}}czyk, ``Exact {M}arkovian and
  non-{M}arkovian time dynamics in waveguide {QED}: collective interactions,
  bound states in continuum, superradiance and subradiance,'' {Quantum} {\bf
  3}, 213 (2019).

\bibitem{facchi2019bound}
P. Facchi {\it et~al.}, ``Bound states in the continuum for an array of quantum
  emitters,'' Physical Review A {\bf 100}, 023834 (2019).

\bibitem{fang2018non}
Y.-L.~L. Fang, F. Ciccarello, and H.~U. Baranger, ``Non-Markovian dynamics of a
  qubit due to single-photon scattering in a waveguide,'' New Journal of
  Physics {\bf 20}, 043035 (2018).

\bibitem{carmele2020pronounced}
A. Carmele {\it et~al.}, ``Pronounced non-Markovian features in multiply
  excited, multiple emitter waveguide QED: Retardation induced anomalous
  population trapping,'' Phys. Rev. Research {\bf 2}, 013238 (2020).

\bibitem{grimsmo2015time}
A.~L. Grimsmo, ``Time-delayed quantum feedback control,'' Physical review
  letters {\bf 115}, 060402 (2015).

\bibitem{pichler2017universal}
H. Pichler {\it et~al.}, ``Universal photonic quantum computation via
  time-delayed feedback,'' Proceedings of the National Academy of Sciences {\bf
  114}, 11362 (2017).

\bibitem{pichler2016photonic}
H. Pichler and P. Zoller, ``Photonic circuits with time delays and quantum
  feedback,'' Physical review letters {\bf 116}, 093601 (2016).

\bibitem{dinc2020diagrammatic}
F. Dinc, ``Diagrammatic approach for analytical non-Markovian time-evolution:
  Fermi's two atom problem and causality in waveguide quantum
  electrodynamics,'' arXiv preprint arXiv:2006.01843  (2020).

\bibitem{zheng2013persistent}
H. Zheng and H.~U. Baranger, ``Persistent quantum beats and long-distance
  entanglement from waveguide-mediated interactions,'' Physical review letters
  {\bf 110}, 113601 (2013).

\bibitem{cheng2017waveguide}
M.-T. Cheng, J. Xu, and G.~S. Agarwal, ``Waveguide transport mediated by strong
  coupling with atoms,'' Physical Review A {\bf 95}, 053807 (2017).

\end{thebibliography}
\end{document}